
\documentclass{article}

\usepackage{hyperref}





\usepackage[utf8]{inputenc} 
\usepackage[T1]{fontenc}    
\usepackage{hyperref}       
\usepackage{url}            
\usepackage{amsfonts}       
\usepackage{nicefrac}       
\usepackage{microtype}      
\usepackage{xspace}
\usepackage{wrapfig}

\pdfoutput=1

\usepackage{graphicx} 

\usepackage[square,sort,comma,numbers]{natbib}
\usepackage{mathtools}
\usepackage{booktabs} 

\usepackage{times}
\usepackage{graphicx} 

\usepackage{amsthm}
\usepackage{amssymb}
\usepackage{amsmath}
\usepackage[bottom]{footmisc}

\usepackage{caption}
\usepackage{subcaption}
\usepackage{amsmath}
\usepackage{enumerate}

\usepackage[utf8]{inputenc} 
\usepackage[T1]{fontenc}    
\usepackage{hyperref}       
\usepackage{url}            
\usepackage{booktabs}       
\usepackage{amsfonts}       
\usepackage{nicefrac}       
\usepackage{microtype}      


\newcommand{\comment}[1]{}

\newtheorem{theorem}{Theorem}
\newtheorem{lemma}{Lemma}
\newtheorem{proposition}{Proposition}
\newtheorem{remark}{Remark}

\newcommand{\esm}[1]{\ensuremath{#1}}

\newcommand\reals{\mathbb{R}}

\newcommand\V{F}
\newcommand\player{feature\xspace}
\newcommand\players{features\xspace}

\newcommand\game{function\xspace}
\newcommand\games{functions\xspace}

\newcommand\efficiency{efficiency\xspace}

\newcommand\share{attribution\xspace}
\newcommand\shares{attributions\xspace}
 
\newcommand\prefixvalues{Shapley-Taylor indices\xspace}
\newcommand\stv{\prefixvalues}
\newcommand\siv{Shapley interaction indices\xspace}

\newcommand\interaction{\esm{\mathcal{I}^k}}

 \author{
   Kedar Dhamdhere \\
   \and
   Mukund Sundararajan \\ 
   \and
   Ashish Agarwal \\
   Google LLC \\
   Mountain View, USA}


%

\begin{document}

\title{The Shapley Taylor Interaction Index}








\vskip 0.3in



\maketitle

\begin{abstract}

The \emph{attribution} problem, that is the problem of attributing a model's prediction to its base features, is well-studied. We extend the notion of attribution to also apply to feature interactions.

The Shapley value is a commonly used method to attribute a model's prediction to its base features. We propose a generalization of the Shapley value called Shapley-Taylor index that attributes the model's prediction to interactions of subsets of features up to some size $k$. The method is analogous to how the truncated Taylor Series decomposes the function value at a certain point using its  derivatives at a different point. In fact, we show that the Shapley Taylor index is equal to the Taylor Series of the \emph{multilinear extension} of the set-theoretic behavior of the model.   

We axiomatize this method using the standard Shapley axioms---\emph{linearity}, \emph{dummy}, \emph{symmetry} and \emph{efficiency}---and an additional axiom that we call the \emph{interaction distribution} axiom. This new axiom explicitly characterizes how interactions are distributed for a class of functions that model pure interaction. 

We contrast the Shapley-Taylor index against the previously proposed Shapley Interaction index (cf. ~\cite{Grabisch99}) from the cooperative game theory literature. We also apply the Shapley Taylor index to three models and identify interesting qualitative insights.
\end{abstract}

\section{Introduction}
\subsection{Motivation}
\label{sec:summary}

 There is considerable literature on feature importance/attribution for deep networks(cf.~\cite{BSHKHM10,SVZ13,SGSK16,BMBMS16,SDBR14,LL17, STY17,Lundberg2017AUA}). The basic idea of attribution is to distribute the prediction score of a model for a specific input to its base features; the attribution to a base feature can be interpreted as its contribution to the prediction. For instance, when attribution is applied to a network that predicts the sentiment associated with a paragraph of text, it quantifies the influence of every word in the text on the network's score. This can be used, for instance, to tell if the model bases its prediction on words that connote a protected category like a specific race/gender/religion. This would be indicative of the model possibly being biased along the protected dimension. Attribution/feature importance for Deep Networks has been applied to a variety of real world applications, for instance in health, drug-discovery, machine translation, natural language tasks, recommendation systems etc. Thus, attributions are quite useful despite their simple form; notice they don't reveal the logic of the network beyond base feature importance.

   In this paper, we take a step towards making attributions a somewhat richer form of explanation by identifying the importance of \emph{feature interactions}, either pairwise or of higher orders. We would like to identify to what extent a set of features exert influence in conjunction as opposed to independently. We expect the study of interactions to be fruitful. Deep networks are likely to have an abundance of strong feature interactions, because they excel at creating higher-order representations (e.g. filters) out of the base features (e.g. pixels). We also expect the study of interactions to be critical for the tasks that cannot be performed by features acting independently. We study two such tasks in Section~\ref{sec:applications}. In the sentiment analysis task, \emph{negation}, should manifest as an interaction between the negation word (e.g. not) and the sentiment bearing word (e.g. good or bad) that it modifies. If such an interaction is not detectable, then the network needs fixing. Another example is reading comprehension, i.e., question answering about paragraphs of text. A good model will match question words to certain words/phrases in the paragraph, and these matches should manifest as interactions between those words. 
   
   Before we describe our contributions, we mention some related work on feature interactions besides the attribution literature briefly described above.

\subsection{More Related Work}

\subsubsection{Shapley Value, Shapley Interaction Value}
Some of the deep network attribution literature (cf.~\cite{STY17, Lundberg2017AUA, Lundberg18, Datta2015Influence, Datta2016Algorithmic, Strumbelj2010, Sliwinski2019}) is built on prior work in cooperative game theory, specifically, the the Shapley value (\cite{Shapley53}) and its continuous variant~\cite{AS74}. These prescribe a way to distribute the value of a game among its players.\footnote{Games are analogous to models, players to features, and the shares to the feature importance.} Shapley values have been used to study global feature/variable importance in statistics (cf.~\cite{Owen2014, Cohen2005}). 
The work most closely related to ours is the Shapley interaction value, first proposed by \cite{Owen72} to study pairwise interactions between players. \cite{Grabisch99} generalized it to study interactions of higher orders, and provided an axiomatic foundation. \cite{Lundberg18} applied it to studying feature interactions in trees. We provide comparison against the Shapley interaction value.
 
\subsubsection{Interactions in Machine Learning} 
 
 It is hard to describe all of the vast literature on interactions in machine learning and statistics. Most of this literature is focused on \emph{global} feature importance, i.e., important interactions across the data set. In contrast, we study feature importance for individual inputs. 
 
 There is, for instance, the classic literature on ANOVA (cf.~\cite{ANOVA}), and the more recent literature on Lasso,~\cite{Lasso}, both of which can be used to quantify the importance of putative interactions. 
 
 We mention some recent deep learning literature: \cite{Tsang} constructs a generalized additive model that mimics the behavior of a deep network by investigating the structure of the inter-layer weight matrices. \cite{Tsang2} forces the weight matrices to be block-diagonal, restricting the type of interactions, by designing the appropriate regularization. \cite{Cui} studies \emph{pairwise} interactions by building deep networks of a specific form and then interpreting the network via its gradients, and averaging appropriately over the data set. \cite{murdoch} combines 
 agglomerative clustering with influence propagation across the layers of the deep-network, to produce a hierarchical decomposition of the input with influence scores for each group; this work, unlike the others in this section, is about feature importance of individual inputs.

\subsection{Model} 

In this section, we formally model the attribution problem for interactions. We have a set of features $N$. The deep network is modeled as a function $\V: 2^N \to \reals$, i.e., we treat the features as discrete variables. Modeling the network as as a function of boolean features simplifies our theory of attribution, i.e., we can investigate the influence of variables via the change in score resulting from removing/ablating the variable. For many problems, treating features as boolean is natural; for a text model, words are either present or absent. In our applications, we will model the absence of a feature by replacing it with a sentinel value (zero embedding vector/out of vocabulary word/average value of the feature across the training data). 

To simplify the notation, we denote the cardinalities of a set $S, T$ etc. using lowercase letters: $s, t$. We omit braces for small sets and write $T \cup i$ instead of $T \cup \{ i \}$ and $T \cup  ij $ instead of $T \cup \{ i,j \}$.

Define
\begin{equation}
\delta_i\V(T) = \V(T \cup i) - \V(T)
\end{equation}
and
\begin{equation}
\label{hessian}
\delta_{ij}\V(T) = \V(T \cup ij) - \V(T \cup i) - \V(T \cup j) + \V(T)
\end{equation}

Here $i, j \not\in T$.\footnote{If features $i$ and $j$ don't interact, this quantity will be zero, because adding $j$ to the set $T$ or to the set $T\cup i$ will have the same influence.} These are discrete equivalents of first and second order derivatives. In general, for $T \subseteq N \setminus S$, we define the discrete derivative with respect to set $S$ as:
\begin{equation}
\label{discrete-derivative}
\delta_S \V(T) = \sum_{W \subseteq S} (-1)^{w-s} \V(W \cup T)    
\end{equation}

We use a fixed number $k$ as the \textbf{order of explanation} to mean that we'll compute the \prefixvalues for subsets of size up to $k$. For instance, $k = 2$ corresponds to computing main effects as well as pairwise interaction effects. For a set $S$  such that $|S| \leq k$, $\interaction_S(v)$ denotes the interaction effect for the set $S$.

\subsection{Our Results}
\label{sec:results}

 Our goal is to axiomatically identify the \emph{best $k$th order explanation}. For instance, when $k=2$, we would like to identify the main effects of each feature and the interaction effects for every pair of features. 
 
 \begin{itemize}
     \item  We propose the \emph{Shapley-Taylor index} to solve this problem (Section~\ref{sec:method}).
     \item  We axiomatize the \stv in Section~\ref{sec:axiom}. We introduce a new axiom called the \emph{interaction distribution} axiom. This axiom explicitly defines how interactions are distributed for a class of functions called \emph{interaction functions}.
     \item We compare the \stv against a previously proposed method called \siv(\cite{Grabisch99}).
     This method is the unique method that satisfies variants of the standard Shapley axioms, namely dummy,  linearity and symmetry (see Section~\ref{sec:axiom} for axiom definitions) and an additional axiom called \emph{the recursive axiom}\footnote{The recursive axiom requires that the interaction index for a pair of \players $i,j$ is equal to the difference in the Shapley value of \player $i$ in a game with \player $j$ omnipresent and the Shapley value of \player $i$ in a game with \player $j$ absent.}. Critically, the \siv do not satisfy efficiency, i.e., the attributions do not sum to the score of the function at the input minus its score at the empty set. Axiomatically, the main contribution of this paper is to replace the recursive axiom with the efficiency axiom; the consequence is that the method of choice goes from being \siv to \stv. Section~\ref{sec:example} contrasts the results of the two methods. We find that the lack of efficiency causes \siv to amplify interaction effects (Section~\ref{sec:linear}), or causes the interaction effects to have seemingly incorrect signs (Section~\ref{sec:majority}).
     \item  In Section~\ref{sec:taylor}, we connect the Shapley-Taylor interaction index to the Taylor series (with a Lagrangian remainder); we show that the Taylor series applied to the so called multilinear extension of the function is equivalent to the Shapley-Taylor index applied to the function.
     \item Though our key contributions and evaluations are mainly theoretical, we demonstrate the applicability of our work in Section~\ref{sec:applications}, which studies models for three tasks (sentiment analysis, random forest regression, and question answering). We identify certain interesting interactions.
 \end{itemize}

\section{\stv}\label{sec:method}

\subsection{Shapley Value}
\label{sec:shapley}

Before we discuss feature interactions, let us revisit the Shapley value. The central concept in the Shapley value is that of the \emph{marginal}, i.e., the change in the function value $\V$ by the addition of a \player, i.e. $\delta_{i}\V(S) = \V(S \cup i) - \V(S)$. In general, for a nonlinear function $\V$, the value of this expression depends on the set $S \subseteq N \setminus \{i\}$ at which we compute the marginal and there are several choices for this set $S$. The Shapley value defines a random process that implicitly prescribes a certain weighting over these sets. Given an ordering of the \players, add the \players one by one in this order. Each \player $i$ gets ascribed the marginal value of adding it to the \players that precede it. The Shapley value of the \player $i$ is the expected value of this marginal over an ordering of \players chosen uniformly at random.

The Shapley Value is known to be the unique method that satisfies four axioms (see Section~\ref{sec:axiom} for formal definitions of the axioms): \emph{efficiency} (the attributions of all the \players sum to the difference between the prediction scores at the input minus that at the empty set.), \emph{symmetry} (symmetric \players receive equal \shares), \emph{linearity} (the sum of the \shares of a \player for two functions is identical to the \share of the \player in the game formed by the sum of the two functions) and the \emph{dummy} (if
the marginal value of a \player is zero for every set $S$, it has a zero \share) axioms. 

\subsection{Definition of \stv}

In this section we define \prefixvalues. Just like the Shapley value, \stv can also be computed as an expectation over orderings chosen uniformly at random. The output of the \stv is more extensive compared to the Shapley value; it includes attributions to interaction terms.  

For an order of explanation $k = 2$, i.e., the index specifies values for individual \players and pairs of \players. The indices for an individual \player $i$ is just the marginal $\delta_{i}\V(\emptyset)$. To compute the indices for pairs, we pick an ordering of the \players, and ascribe to each pair of \players the expression~\ref{hessian} computed at the set $S$ of \players that precede both $i$ and $j$ in the ordering. The Shapley Taylor interaction index for a pair is the expectation of this quantity over an ordering chosen uniformly at random. 

Similarly for the general case, \prefixvalues are defined by random process over orderings of \players. Let $\pi = (i_1, i_2, \ldots, i_n)$ be an ordering. Let $\pi^{i_k} = \{i_1, \ldots, i_{k-1} \}$ be the set of predecessors of
$i_k$ in $\pi$. For a fixed ordering $\pi$ and a set, we define \prefixvalues $\interaction_{S, \pi}(\V)$ as follows:

\begin{equation}\label{eq:permutation}
\interaction_{S, \pi}(\V) = 
\begin{cases}
\delta_{S} \V(\emptyset) & \text{if $|S| < k$} \\
\delta_{S} \V(\pi^S) & \text{if $|S| = k$,} \\
\end{cases}
\end{equation}

Here, $\pi^S$ is defined as $\cap_{i \in S} \pi^i$, the set of elements that precede all of the \players in $S$. Let us briefly discuss the three cases. When the size of the interaction term is strictly less than the order of explanation, its interaction value (for the fixed permutation) is simply equal to the the discrete derivative (Equation~\ref{discrete-derivative}) at the empty set. Notice that this quantity does not depend on the permutation itself.
When the order of approximation is $k=2$, this is just the marginal value of adding the feature to the empty set. When the size of the interaction term is \emph{equal} to the order of explanation, its interaction value (for the fixed permutation) is equal to the discrete derivative at the largest set of elements that precede all of the elements of $S$. when the order of approximation is $k=2$, the discrete derivatives match Equation~\ref{hessian}. 

The \prefixvalues are defined as the expectation of $\interaction_{S, \pi}(\V)$ over an ordering of the $N$ \players chosen uniformly at random:
\begin{equation}\label{eq:randorder}
  \interaction_{S}(\V) = \mathbb{E}_{\pi}(\interaction_{S, \pi}(\V))
\end{equation}

A noteworthy special case is $k=1$. Here the first case of Equation~\ref{eq:permutation} does not occur and the discrete derivatives correspond to marginals. The resulting definition is precisely the well-known Shapley value. \stv thus generalize the Shapley value. 

We have defined the \stv in terms of permutations. The theorem below derives a closed form expression for \stv. This provides a method to compute the \stv. We have given the proof in the Appendix (Section~\ref{sec:stv_expression}).

\begin{theorem}\label{thm:stv_expression}
Let $k$ be the order of explanation. For a set $S \subseteq N$, such that $|S| = k$,
\stv satisfy the following expression:
\begin{equation}
\label{eq:stv_formula}
 \interaction_{S}(\V) = \frac{k}{n} \sum_{T \subseteq N \setminus S} \delta_S \V(T)  \frac{1}{{n-1 \choose t}}   
\end{equation}
\end{theorem}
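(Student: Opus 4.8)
The plan is to evaluate the expectation in Equation~\eqref{eq:randorder} directly by conditioning on the value of the predecessor set $\pi^S$. Since $|S| = k$, only the second branch of Equation~\eqref{eq:permutation} is relevant, so $\interaction_{S,\pi}(\V) = \delta_S\V(\pi^S)$ and the entire computation reduces to understanding the distribution of the random set $\pi^S = \cap_{i\in S}\pi^i$ induced by a uniformly random ordering $\pi$ of the $n$ \players. Grouping permutations by the value $T = \pi^S$, I would write
\begin{equation}
\interaction_S(\V) = \sum_{T \subseteq N\setminus S} \Pr[\pi^S = T]\,\delta_S\V(T),
\end{equation}
so that the theorem follows immediately once I establish $\Pr[\pi^S = T] = \frac{k}{n}\cdot\frac{1}{\binom{n-1}{t}}$.

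The key structural observation that makes this probability tractable is that an element $w \notin S$ precedes \emph{every} element of $S$ if and only if it precedes the earliest-appearing element of $S$. Hence $\pi^S$ is exactly the prefix of $\pi$ consisting of the elements that occur before the first element of $S$ to appear. Consequently the event $\{\pi^S = T\}$ is precisely the event that the first $t$ positions of $\pi$ are occupied (in some order) by the elements of $T$, that position $t+1$ is occupied by one of the $k$ elements of $S$, and that the remaining $n-t-1$ positions are filled by the other elements in any order.

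Given this characterization, the count is routine: there are $t!$ arrangements of $T$ in the first $t$ slots, $k$ choices for which element of $S$ occupies slot $t+1$, and $(n-t-1)!$ ways to fill the rest, for $k\cdot t!\,(n-t-1)!$ favorable orderings out of $n!$. Simplifying,
\begin{equation}
\Pr[\pi^S = T] = \frac{k\, t!\,(n-t-1)!}{n!} = \frac{k}{n}\cdot\frac{t!\,(n-1-t)!}{(n-1)!} = \frac{k}{n}\cdot\frac{1}{\binom{n-1}{t}},
\end{equation}
which is exactly the weight appearing in Equation~\eqref{eq:stv_formula}. Substituting this back into the conditioned sum yields the claimed closed form.

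I expect the main obstacle to be the second step --- cleanly justifying that $\pi^S$ coincides with the prefix before the first element of $S$ and correctly enumerating the favorable permutations; once that combinatorial fact is in hand, the arithmetic collapses to the stated binomial weight and summation over all $T\subseteq N\setminus S$ completes the argument. It is worth noting explicitly that this derivation only invokes the $|S|=k$ case of Equation~\eqref{eq:permutation}, so the permutation-independent $|S|<k$ case plays no role here.
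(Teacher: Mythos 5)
Your proof is correct, but it takes a genuinely different route from the paper's. You evaluate the defining expectation $\interaction_S(\V) = \mathbb{E}_{\pi}[\delta_S \V(\pi^S)]$ directly by computing the distribution of the random predecessor set $\pi^S$: your key observation that $\pi^S$ is the prefix before the first element of $S$ to appear is right, the count $k \cdot t! \, (n-t-1)!$ of favorable orderings is right, and the simplification $\frac{k\,t!\,(n-t-1)!}{n!} = \frac{k}{n}\cdot\frac{1}{\binom{n-1}{t}}$ checks out (one can sanity-check it on $n=3$, $k=2$, where $\Pr[\pi^S=\emptyset]=2/3$ and $\Pr[\pi^S=\{3\}]=1/3$). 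The paper instead proceeds algebraically: it decomposes $\V$ in the unanimity-\game basis with M\"obius coefficients $a(T)$, invokes the values $\interaction_S(u_T) = 1/\binom{t}{k}$ established in Proposition~\ref{prop:unanimity_phi}, converts $a(T)$ to discrete derivatives via Lemma~\ref{lemma:mobius}, and collapses the resulting alternating inner sum using a Beta-function integral to recover the same weight $\frac{k}{n}\frac{1}{\binom{n-1}{t}}$. Your argument is more elementary and more self-contained: it needs nothing beyond the permutation definition in Equations~\ref{eq:permutation} and~\ref{eq:randorder}, whereas the paper's proof leans on the axiomatization machinery (in particular on Proposition~\ref{prop:unanimity_phi}, hence on \stv{} satisfying axioms 1--5). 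What the paper's route buys in exchange is reuse: the M\"obius decomposition and the unanimity values are developed anyway for the uniqueness theorem, and the proof makes explicit how \stv{} distributes the Harsanyi dividend $a(T)$ equally among the $\binom{t}{k}$ size-$k$ subsets of $T$, a structural fact your permutation count does not surface. Your closing remark that only the $|S|=k$ branch of Equation~\ref{eq:permutation} is needed is also correct, since the theorem is stated only for $|S|=k$.
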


\begin{remark}
The formula in Theorem~\ref{thm:stv_expression} gives us a way to compute the \stv. It involves computation over all subsets of the feature
set and hence it takes exponential time. In practice, we can trade off accuracy for speed. One way to obtain a fast approximation is to apply Equation~\ref{eq:permutation} over a sample of permutations. This is similar to the Shapley value sampling methods\cite{Maleki2013}.
In Section~\ref{sec:applications}, we use a different approximation. We first identify a subset of features with high attribution (using Shapley values or Integrated Gradients method). We use Theorem~\ref{thm:stv_expression} formula on this subset.
\end{remark}

We now define the previously proposed \textbf{Shapley interaction index} (\cite{Grabisch99})
\footnote{
The Shapley interaction index can also defined by a random order process. To compute the interaction index for a set $S$, the set $S$ is fused into an artificial player. The players are ordered randomly as before and the discrete derivative $\delta_S$ is evaluated at the set of players which precede the artificial player in the ordering. The combinatorial weights that occur in Equation~\ref{eq:siv} arise from this random process.}. The Shapley interaction index for a subset $S \subseteq N$ for a function $\V \subseteq \mathcal{G}^N$ is defined as:
    \begin{equation}
    \label{eq:siv}
        I_{Sh}(\V, S) := \sum_{T \subseteq N \setminus S} \frac{(n-t-s)! t!}{(n-s+1)!} \delta_S \V(T)
    \end{equation}

\comment{
 \begin{remark}
 Let us briefly discuss why we inserted 'Taylor' in the title of the method. First, there is an obvious analogy between truncated Taylor series expansions and the \stv. Just as the Taylor series approximates a function at a point using its derivatives at another point and the Lagrangian remainder, \stv approximates the function at the input using its discrete derivatives at the empty set and interaction terms analogous to the Lagrangian remainder that model interactions of order $k$ and higher. In fact, we can show that the Shapley-Taylor indices result from applying the Taylor series with the standard Lagrangian remainder to the \emph{multilinear extension} of the discrete function. Multilinear extensions, first studied by ~\cite{Owen72}, turn a game (discrete function) into a continuous one by making every player (feature) $i$ join the game with some probability $x_i$.
 \end{remark}
}

\section{Axiomatization of the Shapley-Taylor Interaction Index}
\label{sec:axiom}

In this section, we axiomatize \stv.
Let $\mathcal{G}^{N}$ denote the set of \games  on $N$ \players. 

The first three axioms are all variants of the standard Shapley axioms generalized to interactions by~\cite{Grabisch99}; they were used in the axiomatization of \siv. 

\begin{enumerate}
    \item \textbf{Linearity axiom:} $\interaction(\cdot)$ is a linear function; i.e. for two \games $\V_1, \V_2 \in \mathcal{G}^N$, $\interaction_{S}(\V_1 + \V_2) = \interaction_{S}(\V_1) + \interaction_{S}(\V_2)$ and $\interaction_{S}(c \cdot \V_1) = c \cdot \interaction_{S}(\V_1)$.
    \item \textbf{Dummy axiom:} If $i$ is a dummy \player for $\V$, i.e. $\V(S) = \V(S \setminus i) + \V(i)$ for any $S \subseteq N$ with $i \in S$, then 
    \begin{enumerate}[(i)]
        \item $\interaction_i(\V) = \V(i)$
        \item for every $S \subseteq N$ with $i \in S$, we have $\interaction_{S}(\V) = 0$
    \end{enumerate}
    \item \textbf{Symmetry axiom:} for all \games $\V \in \mathcal{G}^N$, for all permutations $\pi$ on $N$, :
    $$ \interaction_{S}(\V) = \interaction_{\pi S}(\pi \V)$$
    where $\pi S := \{ \pi(i) | i \in S \}$ and the \game $\pi v$ is defined by $(\pi \V)(\pi S) = \V(S)$, i.e. it arises from
    relabeling of \players $1,\ldots, n$ with the labels $\pi(1), \ldots , \pi(n)$.
\end{enumerate}

In addition to these three axioms from~\cite{Grabisch99}, we use the \efficiency axiom. 
Again, this is a generalization of the standard \efficiency axiom for Shapley values.

\begin{enumerate}
\setcounter{enumi}{3}
\item  \textbf{Efficiency axiom:} for all \games $\V$, $\sum_{S \subseteq N, |S| \leq k} \interaction_S(\V) = \V(N) - \V(\emptyset)$
\end{enumerate}

Finally, we introduce the \emph{Interaction Distribution} axiom. This axiom is defined for \games that we call \emph{interaction \games}. An interaction \game parameterized by a set $T$, has the form $\V_T(S)= 0$ if $T \not\subseteq S$ and has a constant value $\V_T(S)=c$ when $T \subseteq S$. These \games model pure interaction among the members of $T$ (when $|T| >1$---the combined presence of the members of $T$ is necessary and sufficient for the \game to have non-zero value. We call $|T|$ the the order of the interaction of the interaction \game. In machine learning terminology, the function $\V_T(S)$ is a model with a single feature cross of all the (categorical) features in $T$, with a coefficient of $c$ for this cross.

\begin{enumerate}
\setcounter{enumi}{4}
\item  \textbf{Interaction Distribution Axiom:} 
For an interaction \game $\V_T$ parameterized by the set $T$, for all $S$ with 
$S \subsetneq T$ and $|S| < k$, where $k$ denotes the order of explanation, we have $\interaction_{S}(\V_T) = 0$.
\end{enumerate}

This intention of the axiom is to ensure that higher order interactions are \emph{not} attributed to lower order terms or alternatively, that interaction terms of order (say) $l < k$ capture interactions of size $l$ only. The axiom relaxes this restriction for terms of size exactly $k$, which accumulates the contributions of the higher order interactions.This is reductive in the sense that the $k$th order terms don't just reflect the $k$th order interactions, but also the higher order ones. However, as $k$ is increased, the reductivity decreases at the cost of a longer explanation.

We are now ready to state our main result:

\begin{theorem}\label{thm:uniqueness}
\stv are the only interaction indices that satisfy axioms 1-5.
\end{theorem}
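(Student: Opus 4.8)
The plan is to prove both halves of the assertion: that the Shapley-Taylor index satisfies axioms 1--5, and that those axioms pin down a \emph{unique} index, which must therefore be $\interaction$. I would adopt the standard convention $\V(\emptyset)=0$ (attributions are insensitive to an additive constant, and the efficiency axiom only ever sees $\V(N)-\V(\emptyset)$), so that the unit-coefficient \emph{interaction \games} $\{\V_T : \emptyset \ne T \subseteq N\}$ form a basis of $\mathcal{G}^N$; inclusion--exclusion gives $\V = \sum_{\emptyset\ne T}(\delta_T\V(\emptyset))\,\V_T$. For the existence direction I would check the five axioms against the permutation definition (\ref{eq:permutation})--(\ref{eq:randorder}): linearity is immediate since each $\interaction_{S,\pi}$ is a fixed linear combination of values of $\V$; symmetry holds because relabelling the \players merely relabels the uniform distribution over orderings; and the dummy and interaction-distribution axioms reduce to the observations that $\delta_S\V$ vanishes identically whenever $S$ contains a \player whose marginal is constant, and that $\delta_S\V_T(\emptyset)=0$ whenever $T\not\subseteq S$.

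The one axiom requiring real work is efficiency, which I would establish in the sharper per-\emph{permutation} form $\sum_{1\le|S|\le k}\interaction_{S,\pi}(\V)=\V(N)-\V(\emptyset)$ for every fixed $\pi$; the expectation statement is then immediate. Write $\Phi_k(\pi)$ for this left-hand side and $P_j$ for the length-$j$ prefix of $\pi$. I would show $\Phi_k(\pi)=\Phi_{k-1}(\pi)$ by induction, so that everything collapses to the classical Shapley identity $\Phi_1(\pi)=\sum_j\delta_{\pi(j)}\V(P_{j-1})=\V(N)-\V(\emptyset)$. The step rests on the elementary recurrence $\delta_S\V(A\cup j)-\delta_S\V(A)=\delta_{S\cup j}\V(A)$: summing it as one builds the anchor $\pi^S$ up from $\emptyset$ shows that the difference between the order-$(k-1)$ and order-$k$ contributions of a set exactly accounts for the new size-$k$ terms, each size-$k$ set being recovered uniquely by adjoining its $\pi$-minimum. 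This telescoping is the crux of the existence direction and is essentially the combinatorial heart of Theorem~\ref{thm:stv_expression}.

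For uniqueness, suppose $\interaction$ is \emph{any} index satisfying 1--5. By linearity it suffices to determine $\interaction_S(\V_T)$ on each basis \game. Every \player $i\notin T$ is a dummy for $\V_T$ with $\V_T(i)=0$, so the dummy axiom forces $\interaction_S(\V_T)=0$ whenever $S\not\subseteq T$; and for $S\subsetneq T$ with $|S|<k$ the interaction-distribution axiom forces $\interaction_S(\V_T)=0$. The only undetermined terms are therefore $S=T$ when $|T|\le k$, and the $\binom{|T|}{k}$ size-$k$ subsets $S\subseteq T$ when $|T|>k$. Since all \players of $T$ are mutually symmetric in $\V_T$, the symmetry axiom makes the surviving terms of a given size equal, and efficiency fixes their common value: $\interaction_T(\V_T)=1$ when $|T|\le k$, and $\interaction_S(\V_T)=1/\binom{|T|}{k}$ for each size-$k$ $S\subseteq T$ when $|T|>k$, all other values being $0$.

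These values are forced by the axioms alone, so any admissible index agrees on the basis and hence, by linearity, on all of $\mathcal{G}^N$; combined with the existence direction this shows the Shapley-Taylor index is the unique such index. I expect the per-permutation telescoping for efficiency to be the only genuinely delicate step, the remaining axiom checks and the whole uniqueness argument being short once the reduction to interaction \games is in place. The one subtlety to flag is the $\V(\emptyset)=0$ convention, without which the constant \game is left underdetermined by the axioms.
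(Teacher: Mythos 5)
Your proof is correct, and its uniqueness half is essentially the paper's own argument: the paper likewise reduces to the unanimity basis (its interaction \games with $c=1$), kills every term with $S \not\subseteq T$ by the dummy axiom and every $S \subsetneq T$ with $|S|<k$ by the interaction distribution axiom, and then lets symmetry plus efficiency force $\phi_T(\V_T)=1$ for $|T|\le k$ and $\phi_S(\V_T)=1/\binom{t}{k}$ for the $k$-subsets of larger $T$ (Proposition~\ref{prop:unanimity_phi}); your flag about the $\V(\emptyset)=0$ convention is the right reading of $\mathcal{G}^N$, without which the unanimity family would not span. Where you genuinely diverge is the efficiency check. The paper proves Proposition~\ref{prop:efficiency} only on unanimity \games and extends by linearity: for fixed $\pi$ it shows, via the lemma that $\delta_S u_T(W)=0$ whenever $S\not\subseteq T$ or $T\not\subseteq S\cup W$, that for $|T|\ge k$ the single surviving term is $S=T_k$, the last $k$ elements of $T$ in $\pi$, where $\delta_S u_T(\pi^S)=1$. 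You instead prove the per-permutation identity $\sum_{1\le |S| \le k} \interaction_{S,\pi}(\V)=\V(N)-\V(\emptyset)$ for \emph{arbitrary} $\V$ by induction on $k$, and your telescoping step checks out: since $\pi^S$ is precisely the prefix of $\pi$ preceding $\min_\pi S$, writing $\pi^S=\{j_1,\ldots,j_m\}$ in $\pi$-order and summing the recurrence
\begin{equation*}
\delta_{S\cup j_r}\V(\{j_1,\ldots,j_{r-1}\}) \;=\; \delta_S\V(\{j_1,\ldots,j_r\})-\delta_S\V(\{j_1,\ldots,j_{r-1}\})
\end{equation*}
gives $\delta_S\V(\pi^S)-\delta_S\V(\emptyset)=\sum_{j\in \pi^S}\delta_{S\cup j}\V\bigl(\pi^{S\cup j}\bigr)$, and the map $(S,j)\mapsto S\cup j$ with $j=\min_\pi(S\cup j)$ is a bijection from such pairs onto all $k$-sets, so $\Phi_k(\pi)=\Phi_{k-1}(\pi)$ and everything collapses to the classical Shapley telescoping at $k=1$. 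Your route buys a basis-free argument and the strictly stronger conclusion that efficiency holds permutation-by-permutation, not merely in expectation; the paper's route buys reuse, since its unanimity computation is exactly the engine behind Proposition~\ref{prop:unanimity_phi} and the closed form in Theorem~\ref{thm:stv_expression}. Incidentally, your telescoping also supplies the detail that the paper's Proposition~\ref{prop:efficiency} leaves terse (it invokes ``two lemmas'' while displaying only one), so nothing in your proposal needs repair.
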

\begin{proof}
We prove this in two steps. First we show that \stv satisfy the axioms (Propositions 1--3). Next we show that
any method that satisfies the axioms assigns specific interaction values to unanimity \games (Proposition 4). 

\begin{proposition}
\stv satisfy the Linearity, Dummy and Symmetry axioms
\end{proposition}
\begin{proof}
In Equation~\ref{eq:permutation}, \stv are defined as expected values of certain discrete derivatives $\delta_S$. The discrete derivative satisfies linearity conditions. Hence \stv satisfy the linearity axiom. The symmetry axiom follows from the fact that \stv are defined as expectations over all permutations. To show the dummy axiom, we note that the discrete derivative $\delta_S \V(T)$ can be rewritten as $\delta_{S \setminus i} \V(T \cup i) - \delta_{S \setminus i} \V(T)$ for any $i \in S$. If $i$ is a dummy \player, it follows that $\delta_S \V(T) = 0$. Consequently $\interaction_S (v) = 0$. Furthermore, $\interaction_{\{ i \}}(\V) = \delta_i \V(\emptyset) = \V(i)$. Thus \stv also satisfy the dummy axiom.
\end{proof}

\begin{proposition}
\stv satisfy the Interaction Distribution axiom.
\end{proposition}

\begin{proof}
Consider an interaction \game $\V_T$ and $S$ such that $|S| < |T|$. Notice that $\interaction_S(\V_T) = \delta_S(\V_T) = \sum_{W \subseteq S} (-1)^{w-s} \V_T(W)$. Since $|W| < |T|$, we know that $\V_T(W) = 0$. Hence $\interaction_S(\V_T) = 0$.
This shows that \stv satisfies the Interaction Distribution axiom.
\end{proof}

\begin{proposition}\label{prop:efficiency}
\stv satisfy the efficiency axiom. Formally,
$$\sum_{S, |S| \leq k} \interaction_{S}(\V) = \V(N) - \V(\emptyset)
$$
\end{proposition}

\begin{proof}
We prove the proposition for unanimity functions. By using additivity axiom, this extends to all \games.
Consider the unanimity \game $u_T$ defined as $u_T(S) = 1$ if $S \supseteq T$, and $0$ otherwise.

\begin{lemma}
For all sets $W$, $\delta_S u_T(W) = 0$ if either $S \not\subseteq T$ or $T \not\subseteq S \cup W$.
\end{lemma}

Fix a permutation $\pi$ as the ordering of the \players. Let $k$ be the order of explanation. The two lemmas above have following implications:\\
- $\interaction_{S, \pi}(u_T) = 0$ if $|S| > |T|$. \\
- For $|T| < k$, $\interaction_{S, \pi}(u_T) = 1$ iff $S = T$, otherwise $0$. \\
- For $|T| < k$, this proves efficiency.
\\
The remaining case is $|T| \geq k$. For $|S| < k$, we have $\interaction_{S, \pi}(u_T) = \delta_S u_T(\emptyset)$. This is $0$ from second lemma.

Finally, consider $|S| = k$. Here, we claim that $\interaction_{S, \pi}(u_T) = 1$ iff $S$ is same as the last $k$ elements of T in the permutation order $\pi$; otherwise it is $0$.

Let $T_k := \{ \text{last $k$ elements of $T$} \} \}$. Notice that if $S = T_k$, then $T \subseteq \pi^S \cup S$. Therefore, $u_T(S \cup \pi^S) = 1$, but $u_T(S' \cup \pi^S) = 0$ for any subset $S' \subset S$. Hence $\delta_S u_T(\pi^S) = 1$.

To prove the other side, there are two cases:

(a) $S$ has an element that is not in $T$: first lemma gives $\interaction_{S, \pi}(u_T) = 0$.\\
(b) $S$ does not contain an element of $T_k$: then $T \not\subseteq S \cup \pi^S$. Second lemma implies $\interaction_{S, \pi}(u_T) = 0$.

This proves the efficiency axiom for $|T| \geq k$.
\end{proof}

To prove the uniqueness, we investigate interactions for unanimity \games.
Recall that, a unanimity \game parameterized by a set $T \subseteq N$ is defined as $\V_T(S) = 1$ iff $S \supseteq T$, and $0$ otherwise. Thus unanimity \games are a subset of Interaction \games used for the Interaction Distribution Axiom.

Since the family of unanimity \games $\{ \V_T \}_{T \subseteq N, T \neq \emptyset}$ forms a linear basis of $\mathcal{G}^N$, it is sufficient to show that  any interaction index that satisfies the axioms 1-5 
assigns specific values on \emph{unanimity \games}. This is shown in the next proposition.

\begin{proposition}\label{prop:unanimity_phi}
Consider the unanimity \game $\V_T$ defined as $\V_T(S) = 1$ if $T \subseteq S$, otherwise $0$. 
Let $k$ be the order of explanation. Let $\phi$ be an interaction index that satisfies the axioms 1-5, then 
\begin{equation}\label{eq:unanimity_stv}
\phi_S(\V_T) = 
\begin{cases}
1 & \text {if $S = T$ and $|S| < k$} \\
0 & \text{if $S \neq T$ and $|S| < k$} \\
\frac{1}{{t \choose k}} & \text{if $S \subseteq T$ and $|S| = k$} \\
0 & \text{$|S| = k$, but $S \not\subseteq T$ } \\
\end{cases}
\end{equation}
\end{proposition}

\begin{proof}
Consider a unanimity \game $\V_T$ and an interaction index $\phi_S$ that satisfies axioms 1-5.
We want to derive $\phi_S(\V_T)$ using the axioms.

Start with the dummy axiom. If $i \not\in T$, then $i$ is a dummy \player for $\V_T$. This implies that if $S \setminus T \neq \emptyset$, then $\phi_S(\V_T) = 0$.

Consider $|S| < k$. The interaction distribution axiom states that $\phi_S(\V_T) = 0$ if 
$S  \subsetneq T$. Next we use the dummy axiom. Note that, for all $j \not\in T$,
$\delta_j \V_T(S) = 0$. Thus $j$ is dummy \player. The dummy axiom implies that 
$\phi_S(\V_T) = 0$ if $j \in S$. Hence, $\phi_S(\V_T) = 0$ for all $S$ such that 
$|S \setminus T| > 0$.

Using the efficiency axiom, we have $\sum_S \phi_S(\V_T) = \V_T(N) = 1$. As we saw before, $\phi_S(\V_T) = 0$ if $S \neq T$.
Hence the sum reduces to  $\phi_S(\V_S) = 1$, when $S = T$.

Finally, consider the case where size of $S$ is same as the order of explanation, i.e.  $|S| = k$.
As we saw earlier, $\phi_S(\V_T) = 0$ if $S \not\subseteq T$. Using the efficiency axiom:
$\sum_{S \subseteq T}  \phi_{S}(\V_T)  = 1$. Furthermore, since $k$ is the order of explanation, the interaction index is
defined to be $0$ for sets larger than $k$. Hence, $\sum_{S \subseteq T, |S| = k}  \phi_{S}(\V_T)  = 1$.

The symmetry axiom implies that each of these terms must be equal. Hence $\phi_{S}(\V_T) = \frac{1}{{t \choose k}}$.
\end{proof}

Proposition~\ref{prop:unanimity_phi} shows that the interaction values for a unanimity \game $\V_T$ depend only on $|T|$ and the order of explanation $k$. Since unanimity \games form a basis of $\mathcal{G}^N$, the interaction values extend in a unique way to all the \games. Thus there is a unique method (\stv) that satisfies the axioms 1-5.
\end{proof}

\subsection{Connection to Taylor series}\label{sec:taylor}
In this section, we connect \stv to the Taylor series of the multilinear extension.

The multilinear extension of $\V$ is defined as follows:

\begin{equation}\label{eq:multilinear}
f(x) = \sum_{S \subseteq N} \V(S) \prod_{i \in S} x_i \prod_{i \not\in S} (1 - x_i)     
\end{equation}

where $x_i \in [0, 1], i \in N$. The multilinear extension has a probabilistic interpretation: if \player $i$ is set with probability $x_i$, then $f(x)$ denotes the expected value of the \game. Let 
$g(t) = f(t, t, \ldots, t)$ for $t \in [0, 1]$. Then $g(0) = \V(\emptyset)$ and $g(1) = \V(N)$. For a set $S \subseteq N$, with $|S| = j$, define 
$$\Delta_S f(x) := \frac{\partial^{j} f}{\partial x_{i_1} \ldots \partial x_{i_j}} \text{~~~where~~} S = \{i_1, \ldots, i_j \}$$
Consider the $(k-1)^{\text{th}}$ order Taylor expansion of $g(\cdot)$ at $0$ with Lagrange remainder and the corresponding multivariate expansion of each term in terms of $\Delta_S f(\cdot)$:


\begin{align}
 & \V(N) - \V(\emptyset) = g(1) - g(0)  \notag \\
 & = \frac{g'(0)}{1 !} + \ldots +  \frac{g^{(k-1)}(0)}{(k-1)!} + \int_{t=0}^1 \frac{(1-t)^{k-1}}{(k-1)!} g^{(k)}(t)  dt \label{eq:taylor} \\
 & = \begin{multlined}[t]
 \sum_{j \leq k-1} \sum_{|S| = j} \Delta_S f(0, \ldots, 0) + \notag \\ 
  \sum_{|S| = k} \int_{t=0}^1 k (1-t)^{k-1} \Delta_S f(t, \ldots, t) dt 
 \end{multlined} 
\end{align}

\begin{theorem}
\label{thm:taylorterm}
Let $k$ be the order of explanation and $j < k$. Then $j^{\text{th}}$ order \stv can be obtained from the $j^{\text{th}}$ order terms in the Taylor series. 
$$\Delta_S f(0, \ldots, 0) = \interaction_{S}(\V) \text{~~~where~~} |S| < k$$ 
The $k^{\text{th}}$ order \stv can be obtained from the Lagrange remainder term:
$$ \int_{t=0}^1 k (1-t)^{k-1} \Delta_S f(t, \ldots, t) dt = \interaction_{S}(\V) \text{~~where~} |S| = k$$
\end{theorem}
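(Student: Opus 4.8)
The plan is to reduce both identities to a single formula for the mixed partial derivatives of the multilinear extension, and then to evaluate that formula at the two points that appear in the Taylor expansion~\eqref{eq:taylor}.

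First I would establish the key lemma that
$$\Delta_S f(x) = \sum_{T \subseteq N \setminus S} \delta_S \V(T) \prod_{i \in T} x_i \prod_{i \notin S \cup T} (1 - x_i).$$
The point is that $f$ from Equation~\ref{eq:multilinear} is affine in each coordinate, so differentiating once in $x_j$ replaces the pair of factors $x_j,(1-x_j)$ by the marginal $\delta_j \V$ over the remaining coordinates. Concretely, writing $f = x_j A + (1-x_j) B$ with $A,B$ independent of $x_j$ gives $\partial f/\partial x_j = A - B$, which after reindexing is exactly the multilinear extension of $\delta_j \V$ on $N \setminus j$. Iterating this over the coordinates of $S$, and using that the discrete derivative composes via $\delta_S \V(T) = \delta_{S \setminus i}\V(T \cup i) - \delta_{S \setminus i}\V(T)$, yields the displayed formula. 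This lemma is the workhorse for both claims.

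For the low-order terms ($|S| = j < k$) I would evaluate the lemma at the origin. Every summand carries the factor $\prod_{i \in T} x_i$, which vanishes at $x=0$ unless $T = \emptyset$, so only the $T = \emptyset$ term survives and $\Delta_S f(0,\ldots,0) = \delta_S \V(\emptyset)$. On the other side, Equation~\ref{eq:permutation} gives $\interaction_{S,\pi}(\V) = \delta_S \V(\emptyset)$ for \emph{every} permutation $\pi$ when $|S| < k$, so the expectation in Equation~\ref{eq:randorder} is again $\delta_S \V(\emptyset)$. The two sides agree, proving the first identity. For the top-order term ($|S| = k$) I would substitute $x = (t,\ldots,t)$ into the lemma, so each summand becomes $\delta_S \V(T)\,t^{|T|}(1-t)^{\,n-k-|T|}$. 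Interchanging the finite sum with the integral and absorbing the $k(1-t)^{k-1}$ weight of the Lagrange remainder, the remainder term factors as $\sum_{T \subseteq N \setminus S} \delta_S \V(T)\, k \int_0^1 t^{|T|}(1-t)^{\,n-|T|-1}\,dt$. The integral is the Beta integral $B(|T|+1,\,n-|T|) = \frac{|T|!\,(n-|T|-1)!}{n!}$, and simplifying $k\cdot\frac{|T|!\,(n-|T|-1)!}{n!} = \frac{k}{n}\,\frac{1}{\binom{n-1}{|T|}}$ makes the expression coincide termwise with the closed form in Theorem~\ref{thm:stv_expression}, which is $\interaction_S(\V)$.

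The main obstacle is the derivative lemma itself: carrying out the induction and the exponent bookkeeping so that the factors $\prod_{i \in T} x_i \prod_{i \notin S \cup T}(1-x_i)$ emerge with exactly the right powers. Once that formula is in hand the origin evaluation is immediate, and the remainder collapses to a single Beta integral whose value is precisely engineered to reproduce the binomial weights $\tfrac{1}{\binom{n-1}{t}}$ of Theorem~\ref{thm:stv_expression}.
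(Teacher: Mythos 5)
Your proof is correct, but it takes a genuinely different route from the paper's. The paper never writes down a general formula for $\Delta_S f$: it reduces everything to unanimity \games $u_W$, computes $\Delta_S f_W(x) = \prod_{i \in W \setminus S} x_i$ for the monomial extension $f_W(x) = \prod_{i \in W} x_i$, checks both identities against the unanimity values $\interaction_S(u_W)$ from Proposition~\ref{prop:unanimity_phi} (the remainder integral collapsing to $k\int_0^1 t^{w-k}(1-t)^{k-1}\,dt = 1/\binom{w}{k}$), and then extends to general $\V$ by linearity since unanimity \games form a basis of $\mathcal{G}^N$. You instead prove the identity $\Delta_S f(x) = \sum_{T \subseteq N \setminus S} \delta_S \V(T) \prod_{i \in T} x_i \prod_{i \notin S \cup T}(1-x_i)$ --- i.e., that the mixed partial of the multilinear extension is the multilinear extension of $\delta_S \V$ on $N \setminus S$, which is Owen's classical observation, with a routine induction via the affine decomposition $f = x_j A + (1-x_j)B$ --- and evaluate it directly. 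This buys you two things: the low-order case $|S| < k$ follows immediately from the definition in Equations~\ref{eq:permutation} and~\ref{eq:randorder}, with no appeal to unanimity \games at all; and the top-order case reduces to a single Beta integral, $k\,B(t+1, n-t) = \frac{k}{n}\frac{1}{\binom{n-1}{t}}$, matching the closed form of Theorem~\ref{thm:stv_expression} termwise (your computation checks out, including the exponent bookkeeping $n-k-|T|+k-1 = n-|T|-1$). The tradeoff is that your top-order argument leans on Theorem~\ref{thm:stv_expression}, whose own proof in the appendix runs through the M\"obius/unanimity machinery, so the basis decomposition is relocated rather than eliminated; the paper's route is self-contained given Proposition~\ref{prop:unanimity_phi}. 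One small omission: the paper's appendix proof also justifies the multivariate expansion in Equation~\ref{eq:taylor}, namely $g^{(j)}(0)/j! = \sum_{|S|=j} \Delta_S f(0,\ldots,0)$ and the analogous identity for $g^{(k)}(t)$ inside the remainder, using multilinearity of $f$ so that only the $j!$ identical mixed partials survive; your proposal takes that expansion as given, which suffices for the two displayed identities of the theorem but leaves Equation~\ref{eq:taylor} itself unproved.
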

We give the proof of the Theorem and Equation~\ref{eq:taylor} in the appendix. For $k = 1$, we see that the definition of \stv in Equation~\ref{eq:permutation} is exactly the Shapley value. Furthermore, Theorem~\ref{thm:taylorterm} reduces to the result by \cite{Owen72} (Theorem~5) showing that the Shapley value can be obtained by integrating the gradient of $f$ along the diagonal line.

\section{Comparison of Shapley Interaction Index and Shapley Taylor Index}
\label{sec:example}

\subsection{A Linear Model with Crosses}
\label{sec:linear}

We illustrate the difference between \stv and \siv using a function that models linear regression with crosses. Consider the following \game with 3 binary \players: $F(x_1, x_2, x_3) := [x_1] + [x_2] + [x_3] + c [x_1]*[x_2]*[x_3]$; $[x_i]$ is $1$ if feature $x_i$ is present in the input, and zero otherwise.  The last term of the function models a cross between the three features. This is a very simple function; we would expect the main effects to be $1$ each, and the interaction effect of $c$ to be divided equally among the three pairs. 

Indeed, this is what we see with the \stv. The pairwise interaction terms for each of the three pairs is $c/3$, and the total interaction effect matches the coefficient of the cross term, $c$. The main effect terms for \stv are $1$ each, as expected. 

In contrast, those for \siv are $c/2$ each, and the total interaction effect is $3c/2$. Whether $c$ is negative or positive, \siv \emph{amplifies} the magnitude of interaction effect \footnote{If the function does not have crosses of size more than $2$, the two methods coincide in the interaction terms.}. While \siv does not directly define main effect terms, a natural way to define them is to subtract the interaction terms from the Shapley values: $\Phi_{i,i} = \Phi_i - \frac{1}{2}\sum_{i \neq j} \Phi_{i,j}$, where $\Phi_i$ is the $i$th Shapley value and $\Phi_{i,j}$ is the Shapley Interaction index for $\{ i, j \}$. With this definition, the main effect terms and interaction terms satisfy the efficiency axiom.

Using this definition, the main effect terms for \siv are $1-\frac{c}{3}$ each. When $c$ is larger than $3$, the main effect term is negative, which does not match our expectation about the function.

In general, we observe that \siv returns inflated interaction values, possibly because it was not designed to satisfy efficiency, and a consequence of this is that the main effect terms, computed by subtracting the inflated interaction terms from the Shapley value, can have incorrect signs. 

In general, there is no simple way to account for the amount of inflation in \siv. The following example shows that the size of the inflation varies with the size of the cross-terms present in the model: For example, for $F(x_1, \ldots, x_n) := [x_1] * \ldots * [x_n]$, the total interaction effect in \stv is $1$, i.e., $1/{n \choose 2}$ for each pair), while that for \siv is $n/2$ (i.e. $\frac{1}{n-1}$ for each pair). The inflation factor $n/2$ depends on the size of the interaction.

\subsection{The Majority Function}
\label{sec:majority}
Next, we consider the majority \game $\V_{maj}$ over the set of \players $N$ is defined by $\V_{maj}(S) = 1$ if $|S| \geq n/2$ and is $0$ otherwise; here $n=|N|$.

It is easy to check that the singleton terms for pairwise \stv for a majority \games are uniformly zero (for $n >2$). The
pairwise terms are each equal to reciprocal of the number of pairs, $2/(n \cdot (n-1))$, a simple consequence of symmetry and
efficiency. The analytical conclusion from this is that majority \games are all about interaction, and this is intuitively reasonable.

In contrast, the pairwise \siv are uniformly zero for every pair of \players! This is unexpected; one would imagine that in a three \player majority \game there would be non-zero interaction values for sets of size $2$, the size at which majority arises. The results from studying pairwise \siv seems to suggest that there is \emph{no} pairwise interaction! However, \siv of larger interaction sizes do have non-zero values. For instance, for a majority \game with $3$ \players, the interaction value corresponding to all three \players is $-2$, a non-intuitive number. The pattern becomes even more confusing for majority \games of more \players.

Figure~\ref{fig:majority} (page 8) shows the sum of the \siv (in log scale) for all subsets of \players for majority \games as number of \players increases. This displays two non-intuitive characteristics. First, the total interaction diverges (recall the plot is in log scale) despite the fact that the \game value being shared among the \players is constant ($1$). Second, the \emph{sign} of the total interaction alternates. In fact, for a \game of a fixed size, \emph{every} non-singleton interaction value has the same sign. (So the sign alternation is not due to cancellation across the interactions.) There is no intuitive reason for this sign alternation. 

This discussion highlights the value of the \stv interaction distribution approach over that of \siv. To study interactions using \siv, one has to compare $2^n$ quantities, and even then the results appear unstable. In contrast, studying $O(n^2)$ quantities using the pairwise \stv gives us some insight into the functions behavior, no matter what the size of the set of interacting elements.

\section{Applications}
\label{sec:applications}

We study three tasks. The first scores sentences of text for sentiment. The model is a convolutional neural network from~\cite{sentiment} trained over movie reviews\footnote{To ablate a word, we zero out the embedding for that word.}. The second is a random forest regression model built to predict house prices. We use the Boston house price dataset (\cite{boston_housing}). The dataset has $12$ numerical features and one binary feature.\footnote{There are $506$ data points. We split the data into $385$ training and $121$ test examples. We used \texttt{scikit-learn} to train a random forest model. When we ablate a feature, we replace it by its training-data mean.} The third model, called QANet~\cite{YuDLLZC2018}, solves reading comprehension, i.e., identifying a span from a context paragraph as an answer to a question; it uses the SQuAD dataset~(\cite{RajpurkarZLL16}). \footnote{We use Equation~\ref{eq:stv_formula} to compute the \stv. However, the computational efficiency is $O(2^n)$, where $n$ is the number of features. For the sentiment analysis and question answering tasks, $n$ is large enough for this to be prohibitive. So we first use an attribution technique (Integrated Gradients~\cite{STY17}) to identify a small subset of influential features and then study interactions among that subset.}

\subsection{Insights}

  For the sentiment task, we ran \stv across a hundred test examples and identified word pairs with large interactions. Table~\ref{tab:sentiment_interactions} shows some of the interesting interactions we found. 

\begin{table}[!htb]
    \centering
    \small
    \begin{tabular}{p{6cm}|p{1.8cm}}
    \hline
    Example sentence (interaction between bolded words) & interaction effect\\
    \hline
    Aficionados of the whodunit wo\textbf{n't} be \textbf{disappointed} & 0.778 \\
    Watching these eccentrics is \textbf{both} \textbf{inspiring} and pure joy & 0.1 \\
    \textbf{A} \textbf{crisp} psychological drama (and) a fascinating little thriller & 0.48 \\
    With three \textbf{excellent} principal singers, a \textbf{youthful} and good-looking diva \ldots & -0.224 \\
    Australian actor/director \textbf{John} Polson and $\ldots$ make a \textbf{terrific} effort $\ldots$ & -0.545  \\
    \hline
    \end{tabular}
    \caption{Examples of different types of interactions. The interaction effect is the fraction of total change.}
    \label{tab:sentiment_interactions}
\vspace{-4mm} 
\end{table}

The first example captures negation. The second example shows intensification; the effect of 'inspiring' is amplified by the 'both' that precedes it. The third example demonstrates a kind of intensification that would not typically appear in grammar texts; 'crisp' is intensified because it appears at the beginning of the review, which is captured by the interaction with 'A', whose main effect is nearly zero. The fourth examples shows complementarity; the interaction effect is of opposite sign to the main effects. The final example shows that sentiment expressed in third person is suppressed. This is natural because reviews are first-person artifacts.

For the random forest regression task, we found that most of the influential pairwise feature interactions are substitutes; 
e.g. when the predicted price is lower than the average price, the main effects are negative, but the pair-wise interaction
values are positive. We think this is because the features are correlated. We show a plot of main effects vs interaction effects of pairs of features in Figure~\ref{fig:substitutions}.

\begin{figure}[htb]
 \begin{minipage}{0.5\textwidth}
 \centering
  \includegraphics[width=0.9\linewidth]{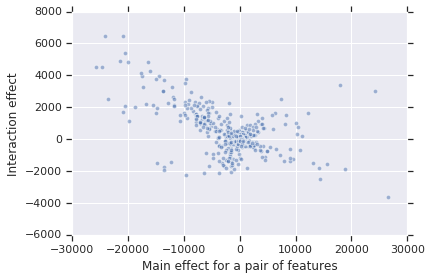}
  
  \caption{{\small A plot of main effects of pairs of features vs interaction effect for the pair. Negative slope indicates substitutes.}}
    \label{fig:substitutions}
  \end{minipage}    
   \begin{minipage}{0.5\textwidth}
   \centering
  \includegraphics[width=0.9\linewidth]{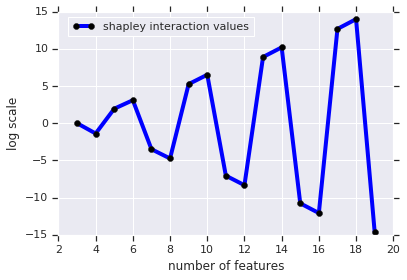}
\caption{{\small The sum of \siv (in log scale) for all subsets of \players for majority \games as a function of the number of \players.}}
\label{fig:majority}  
  \end{minipage}      
    
  \end{figure}

For the reading comprehension task, previous analyses (cf.~\cite{MudrakartaTSD18}) focused on whether the model ignored important question words. These analyses did not identify which paragraph words did the important question words match with. Our analysis identifies this. See the examples in Figure~\ref{fig:qanet_matches}.

\begin{figure}[!htb]
  \includegraphics[width=0.45\linewidth]{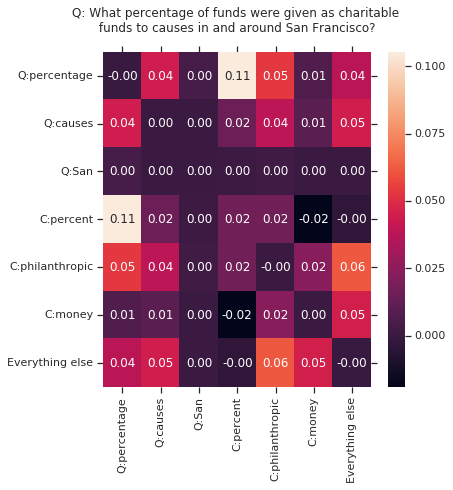}
   \includegraphics[width=0.45\linewidth]{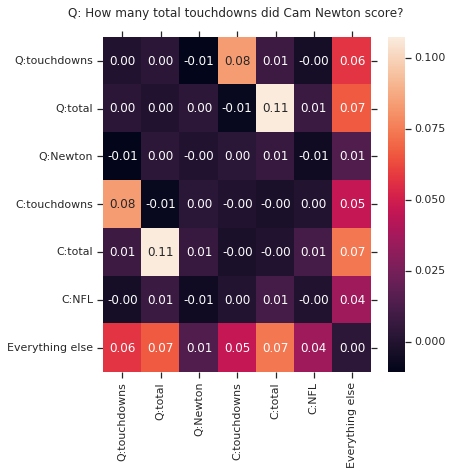}
  \caption{{\small First example shows a match between \emph{percentage} in question to \emph{percent} in the context. Second example shows \emph{total touchdowns} matching between the question and the context. }}
    \label{fig:qanet_matches}
\vspace{-4mm} 
  \end{figure}

\bibliographystyle{acm}
\bibliography{main}

\clearpage
\section{Appendix}

\subsection{Proof of Theorem~\ref{thm:stv_expression}}
\label{sec:stv_expression}
\begin{proof}
For a set $T \subseteq N$, define the \emph{M\"obius coefficients} as:
\[
a(T) = \sum_{S \subseteq T} (-1)^{t-s} \V(S), ~~~~~~ T \subseteq N
\]
Decomposition of any function $\V$ into unanimity \games can be written in terms of $a(T)$'s as follows:
\begin{equation*}
 \V(S)   = \sum_{T \subseteq N} a(T) u_T(S) 
\end{equation*}
Using the linearity axiom, we can extend $\interaction_S$ from unanimity \games to $\V$ as follows:
\begin{equation*}
 \interaction_{S}(v)  = \sum_{T \subseteq N} a(T) \interaction_{S}(u_T)   
\end{equation*}  
 Using the \prefixvalues for unanimity \games from Proposition~\ref{prop:unanimity_phi}, we get:
\begin{align*}
\interaction_{S}(v) & = \sum_{T \subseteq N, S \subseteq T} a(T) \frac{1}{{t \choose k}}   \label{eq:stv_at} \\
 & = \sum_{W \subseteq N \setminus S} a(W \cup S) \frac{1}{{w+k \choose k}}  \text{~ ~ ~ where $W = T \setminus S$} \notag \\
 & = 
 \begin{multlined}[t]
 \sum_{W \subseteq N \setminus S}  \sum_{U \subseteq W} (-1)^{u-w} \delta_S \V(U)  \frac{1}{{w+k \choose k}}  \\
    \text{(using Lemma~\ref{lemma:mobius} below)} \notag
 \end{multlined} \\
 & = \sum_{U \subseteq N \setminus S} \delta_S \V(U) \sum_{W \supseteq U, W \subseteq N \setminus S} \frac{(-1)^{u-w}}{{w+k \choose k}} \notag \\
\end{align*}

Now we analyze the inner sum. We use the following identity: ${w+k \choose k} = 1/(k \cdot B(w+1, k))$, where $B()$ is the Beta function.

\begin{align*}
& \sum_{U \subseteq W \subseteq N \setminus S} \frac{(-1)^{u-w}}{{w+k \choose k}}  \\ 
& =  \sum_{W \supseteq U, W \subseteq N \setminus S} (-1)^{u-w} k \cdot B(w+1, k)  \\
& = k \sum_{w=u}^{n-k} {n-k-u \choose w-u} (-1)^{u-w} B(w+1, k)  \\
& = k \sum_{w=u}^{n-k} {n-k-u \choose w-u} (-1)^{u-w} \int_{0}^1 x^w (1-x)^{k-1} dx  \\
& = \begin{multlined}[t]
k \int_{0}^1 \sum_{w=u}^{n-k} {n-k-u \choose w-u} (-1)^{u-w} x^w (1-x)^{k-1} dx \\
\text{(exchanging sum \& integral)}
\end{multlined} \\
& = \begin{multlined}[t]
k \int_{0}^1 x^u (1-x)^{k-1} \sum_{w'}^{n-k-u} {n-k-u \choose w'} (-1)^{w'} x^{w'} dx \\
\text{(setting $w' = w-u$)} 
\end{multlined} \\
& = k \int_{0}^1 x^u (1-x)^{k-1} (1-x)^{n-k-u} dx \\
& = k B(u+1, n-u)  \text{~ ~ ~ ~ (definition of Beta function)} \\
& = \frac{k}{n} \frac{1}{{n-1 \choose u}}
\end{align*}

We use this expression for the inner sum in the above equation to get:
$$\interaction_{S}(v) = \frac{k}{n} \sum_{U \subseteq N \setminus S} \delta_S \V(U)  \frac{1}{{n-1 \choose u}} $$
This finishes the proof.
\end{proof}

The next Lemma provides a relation between the M\"obius coefficients $a(T)$ and the discrete derivatives.
\begin{lemma}\label{lemma:mobius}
M\"obius coefficients and discrete derivatives are related by following relation:
$$a(T \cup S) = \sum_{W \subseteq T} (-1)^{t-w} \delta_S \V(W)$$
for $S$ and $T$ such that $S \cap T = \emptyset$.
\end{lemma}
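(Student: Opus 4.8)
The plan is to prove the identity by directly expanding both sides and exploiting the disjointness hypothesis $S \cap T = \emptyset$ through a bijective decomposition of subsets. First I would unfold the left-hand side using the definition of the M\"obius coefficients, writing $a(T \cup S) = \sum_{A \subseteq T \cup S} (-1)^{(t+s)-a} \V(A)$, where $a = |A|$ and I have used that $|T \cup S| = t + s$ precisely because $S$ and $T$ are disjoint.

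The key step is to reindex the sum over subsets $A \subseteq T \cup S$. Since $S \cap T = \emptyset$, every such $A$ decomposes uniquely as $A = W \cup U$ with $W = A \cap T \subseteq T$ and $U = A \cap S \subseteq S$, so that $a = w + u$. Substituting, the exponent splits as $(t+s) - a = (t - w) + (s - u)$, and the single sum becomes a double sum that factors cleanly as $a(T \cup S) = \sum_{W \subseteq T} (-1)^{t-w} \sum_{U \subseteq S} (-1)^{s-u} \V(W \cup U)$.

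Then I would recognize the inner sum as a discrete derivative. By the definition in Equation~\ref{discrete-derivative}, $\delta_S \V(W) = \sum_{U \subseteq S} (-1)^{u-s} \V(U \cup W)$, and since $(-1)^{s-u} = (-1)^{u-s}$, the inner sum is exactly $\delta_S \V(W)$. This yields $a(T \cup S) = \sum_{W \subseteq T} (-1)^{t-w} \delta_S \V(W)$, which is the claimed relation.

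The argument involves no genuine obstacle; the only thing requiring care is the bookkeeping of cardinalities and signs. The one substantive point is that the cardinality additivity $|A| = |W| + |U|$, and hence the clean factoring of the sign into a $T$-part and an $S$-part, relies essentially on the disjointness $S \cap T = \emptyset$ stated in the hypothesis. Without it the decomposition $A = W \cup U$ would not be unique and the factorization would break down, so I would make sure to flag explicitly where disjointness is used.
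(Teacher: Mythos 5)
Your proposal is correct and matches the paper's own proof essentially line for line: both expand $a(T \cup S)$ from the definition, split each subset $A \subseteq T \cup S$ as $(A \cap T) \cup (A \cap S)$ using disjointness, factor the sign as $(-1)^{t-w}(-1)^{s-u}$, and identify the inner sum over subsets of $S$ with $\delta_S \V(W)$ via $(-1)^{s-u} = (-1)^{u-s}$. Your explicit flagging of where disjointness is needed (uniqueness of the decomposition and additivity of cardinalities) is a small expository improvement over the paper, which leaves this implicit.
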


\begin{proof}
Let $S$ and $T$ be two sets such that $S \cap T = \emptyset$. We have
\begin{align*}
    a(T \cup S) 
& = \sum_{W' \subseteq T \cup S} (-1)^{t+s-w'} \V(W') \\
& = \begin{multlined}[t]  
\sum_{W \subseteq T} \sum_{U \subseteq S} (-1)^{t+s-w-u} \V(U \cup W) \\
\text{(where $W = W' \cap T$ and $U = W' \cap S$)} 
\end{multlined} \\
& = \sum_{W \subseteq T} (-1)^{t-w} \sum_{U \subseteq S} (-1)^{s-u} \V(U \cup W) & \\
& = \sum_{W \subseteq T} (-1)^{t-w} \delta_S \V(W) \\
\end{align*}
\end{proof}

\subsection{Proof of Theorem~\ref{thm:taylorterm}}
\begin{proof}
Recall that $g(t) = f(t, \ldots, t)$ for $t \in [0, 1]$.
First, we derive the multivariate expression in Equation~\ref{eq:taylor}.
$$
\frac{g^{(j)}(0)}{j !} = \sum_{S \subseteq N, |S| = j} \Delta_S f(0, \ldots, 0)
$$

Consider the expansion of $g^{(j)}(0)$:
\begin{align*}
\frac{g^{(j)}(0)}{j !} & = \frac{1}{j!} \sum_{i_1} \sum_{i_2} \cdots \sum_{i_j} \frac{\partial^{j} f(0, \ldots, 0)}{\partial x_{i_1} \ldots \partial x_{i_j}} \\
\end{align*}
Notice that $f$ is a multilinear function. Therefore, 
 only the mixed partial terms survive. Furthermore, all $j!$ mixed partials wrt $x_{i_1}, \ldots, x_{i_j}$ are identical. Hence, we can simplify the above equation to:

\begin{align*}
\frac{g^{(j)}(0)}{j !} & = \sum_{i_1 < i_2 < \ldots < i_j} \frac{\partial^{j} f (0, \ldots, 0)}{\partial x_{i_1} \ldots \partial x_{i_j}} \\
& = \sum_{S \subseteq N, |S| = j} \Delta_S f(0, \ldots, 0) \\
\end{align*}

Similarly, consider the multivariate Lagrange remainder term in Equation~\ref{eq:taylor}:
$$ \int_{t=0}^1 \frac{(1-t)^{k-1}}{(k-1)!}  g^{(k)}(t) dt $$
As before, in the derivative of $g$, only the mixed partial terms are left:

\begin{equation*}
\frac{g^{(k)}(t)}{k !} = \sum_{S \subseteq N, |S| = k} \Delta_S f(t, \ldots, t)
\end{equation*}
We use this expression in Lagrange remainder term and interchanging the order of integral and summation. Note that there is an extra factor of $k$ that survives on the right side.
\begin{align}\label{eq:lagrangesum}
\int_{t=0}^1 \frac{(1-t)^{k-1}}{(k-1)!}  g^{(k)}(t) dt  =  \notag \\
 \sum_{S \subseteq N, |S| = k}  \int_{t=0}^1 k (1-t)^{k-1} \Delta_S f(t, \ldots, t) dt
\end{align}

For the rest of the proof of the theorem, we consider the special case unanimity \games $u_W$ for a set $W \subseteq N$ and the corresponding multilinear extension:
\begin{equation}\label{eq:unanimity-multilinear}
f_W(x) = \prod_{i \in W} x_i     
\end{equation}

We prove the theorem for the unanimity \games. Since unanimity \games form the basis, the general case follows from linearity axiom.

Recall that a set $S \subseteq N$, $|S| = j$:
$$\Delta_S f_W(x) := \frac{\partial^{j} f_W}{\partial x_{i_1} \ldots \partial x_{i_j}} \text{~~~where~~} S = \{i_1, \ldots, i_j \}$$ 
Using Equation~\ref{eq:unanimity-multilinear}, we get 
\begin{align*}
\Delta_S f_W(x) & = \prod_{i \in W \setminus S} x_i \text{~~~iff $S \subseteq W$}     \\
& = 1 \text{~~~when $S = W$ and $0$ otherwise.}
\end{align*}
Hence $\Delta_S f_W(0) = 1$ iff $S = W$ and $0$ otherwise. This gives us:
\begin{equation}\label{eq:taylorterm1}
\Delta_S f_W(0, \ldots, 0) = \interaction_S (u_W)    
\end{equation}

This proves the result for $j < k$.

Next we analyze the Lagrange Remainder term. Consider a set $S \subseteq N, |S| = k$. We use Eqaution~\ref{eq:unanimity-multilinear} to evaluate $\Delta_S f_W(t, \ldots, t)$:

\begin{align}
& \int_{t=0}^1 k (1-t)^{k-1} \Delta_S f_W(t, \ldots, t) dt \\
& = \int_{t=0}^1 k (1-t)^{k-1}  \left( \prod_{i \in W \setminus S} t \right) dt & \text{($\Delta_S f_W(\cdot) = 0$ if $S \not\subseteq W$)} \notag \\
& = \int_{t=0}^1 k (1-t)^{k-1}  \left(t^{w-k} \right) dt & \text{$|W| = w$ and $|S| = k$} \notag \\
& =  k \int_{t=0}^1 t^{w-k} (1-t)^{k-1} dt & \notag \\
& = k \cdot B(w+1, k) & \text{$B(\cdot, \cdot)$ is the Beta function}\notag \\
& = 1 / {w \choose k}  & \notag \\
& = \interaction_S(u_W) & \text{from Equation~\ref{eq:unanimity_stv}}\notag
\end{align}

This finishes the proof for the remainder term.
\end{proof}

\end{document}